 \newtheorem{thm}{Theorem}[subsection]
 \newtheorem{cor}[thm]{Corollary}
 \newtheorem{lemm}[thm]{Lemma}
\begin{document}

\title {A Cooperative MARC Scheme Using Analogue Network Coding to Achieve Second-Order Diversity}

\author{Mohammad Shahrokh~Esfahani,~\IEEEmembership{Student Member,~IEEE,}
                and~Masoumeh~Nasiri-kenari,~\IEEEmembership{Member,~IEEE}

\thanks{M. S. Esfahani is with the Department
of Electrical and Computer Engineering, Texas A\&M University, College Station,
TX, 77843.\protect\\
E-mail: m.shahrokh@tamu.edu}
\thanks{Masoumeh~Nasiri-kenari is with the Department
of Electrical and Computer Engineering, Sharif University of Technology, Tehran, Iran.\protect\\
E-mail: mnasiri@sharif.edu}}

\maketitle

\begin{abstract}
A multiple access relay channel (MARC) is considered in which an analogue-like network coding is implemented in the relay
node. This analogue coding is a simple addition of the received signals at the relay node. Using ``nulling detection'' structure employed in V-BLAST
receiver, we propose a detection scheme in the destination which is able to provide a diversity order of two for all users. We analytically evaluate the performance of our proposed scheme for the MARC with two users where tight upper bounds for both uncoded and Convolutionally coded transmission blocks are provided. We verify our
analytical evaluations by simulations and compare the results with those of non-cooperative transmission and Alamouti's scheme for the same power and rate transmission. Our results indicate that while our proposed scheme shows a comparable performance compared to the Alamouti's scheme, it substantially outperforms the non-cooperate transmission.
\end{abstract}

\begin{IEEEkeywords}
Cooperative communication, MARC, analogue network coding, V-BLAST detection, Alamouti space-time coding
\end{IEEEkeywords}
\maketitle

\section{Introduction}
It has been shown that a cooperative communication can provide more reliable communication by reducing the error probability, while the required power remains the same. The concept of cooperation in the communication has a long history going back to \cite{gaarder1975capacity} where an information theoretic aspect view of cooperative communication in a multiple access channel (MAC) is defined. Although this work was based on the concept of feedback, it can be seen as a beginning of the notion of cooperation in communication. This concept have then been further extended in the information theoretic framework in several seminal works in \cite{king1978multiple,cover1981achievable,aref1981information,willems1983discrete,willems1985discrete,ozarow1984capacity}.

The initial spark started in~\cite{aref1981information} then lead to a fundamental context of network coding (NC), as a method for achieving capacity of wired networks, introduced in 2000~\cite{ahlswede2000}. Thereafter, there have been a vast number of researches on this idea where different advantages of using NC have been addressed. Owing to the cooperative nature of network coding, in \cite{chen2006wireless}, it was shown that using NC over a wireless network can provide diversity, where it is called cooperative diversity in NC, similar to its previous counterparts \cite{laneman2001efficient,nosratinia2004cooperative,sendonaris2003user,sendonaris2003user2,stefanov2004cooperative,janani2004coded}. It then lead the researchers in wireless communication also to become interested in NC, either in ad-hoc or cellular networks \cite{laneman2004cooperative,xiao2007network,katti2008xors,yang2007network}. Moreover, implementing NC, as an alternative method of routing, in which the nodes in the network process their received packets, can help to increase the multiplexing gain of such networks.   

A very simple example of a cooperative system is a multiple access relay channel (MARC). Although there is no direct cooperation between the users, sharing the relay node can be seen as a cooperative behavior. A relay node can be utilized conventionally, where at each time slot the relay only serves for one of the users. On the other hand, if the relay can process the received data from the users and aggregate these data and then send the new aggregated data to the destination, it would be possible to achieve diversity.  One instance of such doing can be found in \cite{2}, where an XORing scheme was considered as a simple way of implementing NC in a wireless network in order to provide diversity in the network. The authors in ~\cite{3} have proposed a scheme in a MARC in which the users send the log-likelihood ratio (LLR) of their coded messages to the destination in a noisy channel, and they have shown that their proposed scheme improves the bit error probability (BEP) of the system. Practical analogue NC for cooperation in the physical-layer of MARC has not been considered yet, and it has prompted us to consider this method of
cooperative scheme here.

In this paper, we  propose a cooperative scheme based on a simple addition of the received signals in the relay node. We loosely call this scheme as an instance of analogue NC. To this end, we consider a system which uses Decode and Forward (DF) for relay cooperation, and after modulating the users' re-encoded packets, simply adds the packets for transmission in the relay. We assume that each node has a single antenna. Also we assume that each user has a power limitation of $P$ , but the power limitation of the relay node in  $n$-user MARC will be $nP$ which is logical. We propose a detection method in the end-node, i.e. destination, being based on V-BLAST model in a multiple-antenna system~\cite{4,5}. We derive a tight upper bound on the system's BEP for both uncoded and coded transmissions, from which we show that using V-BLAST detection technique, we can achieve full-order diversity, being two in a two-user MARC setting. Simulation results indicate that the bounds are tight enough. Moreover, results show that the proposed scheme can achieve a diversity order of two for any number of users and only one relay node , in contrast to the traditional methods (e.g. Alamouti scheme) in which each user must have two antennas for transmission.

The paper is organized as follows. In Section~\ref{sec:systemdescription}, we briefly describe the system and the proposed transmission protocol. In Section~\ref{sec:detection} we introduce the detection algorithm. The performance analysis is presented in Section~\ref{sec:perfanal}. Numerical experiments are shown in Section~\ref{sec:numexp}. We finally finish the paper with the concluding remarks in Section~\ref{sec:conc}.

Notation and abbreviations in the paper are as below. Boldface lower case letters denote column vectors. Concatenation of several vectors is also denoted by a boldface lower case letter. $k-th$ element of the vector $\mathbf{x}$ is denoted by $x_k$. Boldface upper case letters are used to denote matrices. $.^T$ denotes transpose operators. The cadinality of a set $S$ is shown by $|S|$. $\Pr(A)$ denotes the probability of event $A.$ When dealing with random variable, $x$, we use $F_X(x)$, $f_X(x)$, and $\phi_x(S)$ to denote the cumulative distribution function (CDF), probability density function (PDF), and characteristic function, respectively. Finally, the notation $\text{E}_{x}[g(x)]$ is used to denote taking expectation of $g(x)$ with respect to the subscript $x$. 


\section{System Description}\label{sec:systemdescription}


\subsection{System Model}

For the simplicity of presentation, we consider MARC with two users A and B, and one relay R, and the destination D. The results can be
generalized for more than two users sharing one relay. Figure~\ref{fig:marc-view} shows a two-user MARC used throughly in this paper. 
\def\scaleval{0.5}
\def\plotsizeap{5cm}

\begin{figure}[!hs]
\begin{center}
\includegraphics[scale=\scaleval]{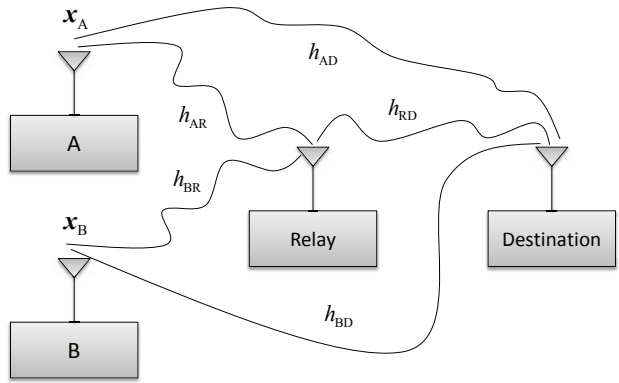}
\caption{An illustrative view of a multiple access relay channel with two users.}\label{fig:marc-view}
\end{center}
\end{figure}
In what follows, we model the received signals at the destination and relay node. The transmitted signal by the relay node is considered in the next subsection. 
Let ${{h_{t,r}}}$ denote the channel gain between nodes $\text{t}$ and $\text{r}$, where $\text{t}\in{{\{\text{A},\text{B},\text{R}}\}}$
and $\text{r}\in{\{\text{R,D}}\}$. The channel gains are assumed to be independent zero mean complex Gaussian random variables with the powers of
${{\Omega_{\rightarrow \text{R}}}}$ and ${{\Omega_{\rightarrow \text{D}}}}$, for the link between the user ($\text{A}$ or $\text{B}$) and the relay ($\text{R}$), and the link between node ($\text{A}$, $\text{B}$, or $\text{R}$) and the destination, respectively. That is, we have considered different qualities for user-relay and
user (relay)-destination channels. We assume a block fading in which the channel (gain) coefficient is constant during each block transmission \cite{tse2005fundamentals}. Moreover, we assume that each user sends a codeword $\mathbf{x}_{i} \in \mathcal{X}$. 

For all the wireless communication links shown in Figure~\ref{fig:marc-view}, we consider an additive white Gaussian noise (AWGN) with one-sided power spectral density (PSD) of ${{N_0}}$. Then, knowing the fact that $h_{t,r}$'s are normally distributed with the powers mentioned above, the instantaneous signal to noise ratio (SNR) of the link between user $\text{t}\in\{\text{A,B}\}$ and the relay node $\text{R}$ is given by $\text{SNR}_{tR}=
  \frac{1}{N_0}{P}{\left|{h_{t,R}}\right|^2}$ being distributed exponentially as follows

\begin{equation}\label{1}
 \text{SNR}_{tR}{\sim}{\exp(-\frac{1}{P}{N_0}{{\Omega_{\rightarrow R}}})}
\end{equation}
Furthermore, the received signals at the destination from two users can be modeled as
\begin{align}\label{eq:received dest}
&{\mathbf{y}_{\text{AD}}}={h_{\text{A,D}}}{\mathbf{x}_\text{A}}+{\mathbf{n}_\text{A}^d}\\
&{\mathbf{y}_{\text{BD}}}={h_{\text{B,D}}}.{\mathbf{x}_\text{B}}+{\mathbf{n}_\text{B}^d}
\end{align}
whereas ${\mathbf{n}_1}$ and ${\mathbf{n}_2}$  are the received additive (vector-valued) white Gaussian
noises at the destination side. The ${\mathbf{x}_{\text{B}}}$ and ${\mathbf{x}_{\text{A}}}$, the vectors of
dimension $n$, denote the transmitted blocks from users $\text{A}$ and $\text{B}$,
respectively, with the following power constraints:
\begin{align}\label{eq:powers}
&\frac{1}{n}{\sum\limits_{i=1}^n{{x_{\text{A},i}}^2}}=P\\
&\frac{1}{n}{\sum\limits_{i=1}^n{{x_{\text{B},i}}^2}}=P
\end{align}
where $n$ is the number of bits in each block. Similarly, the received signals at the relay node are given by\footnote{To keep generality, we assume different statistics for the additive noises corresponding to users' signals received at the relay and destination nodes.}
\begin{align}\label{eq:received dest}
&{\mathbf{y}_{\text{AR}}}={h_{\text{A,R}}}{\mathbf{x}_\text{A}}+{\mathbf{n}^{r}_\text{A}}\\
&{\mathbf{y}_{\text{BR}}}={h_{\text{B,R}}}.{\mathbf{x}_\text{B}}+{\mathbf{n}^{r}_\text{B}}
\end{align}
where $\mathbf{n}^{r}_\text{A}$ and $\mathbf{n}^{r}_\text{B}$ are the noises received at the relay node through user $\text{A}$ and $\text{B}$, respectively.

\subsection{Transmission Algorithm}

Since, in this paper we aim at proposing a cooperative protocol while sharing a single relay node by two nodes, we propose to divide the time as shown in Figure~\ref{fig:timeslot}. In Figure~\ref{fig:timeslot} the transmission time slots are allocated based on time division multiple access (TDMA) scheme. 
\begin{figure}[htp]
\begin{center}
\includegraphics[scale=\scaleval]{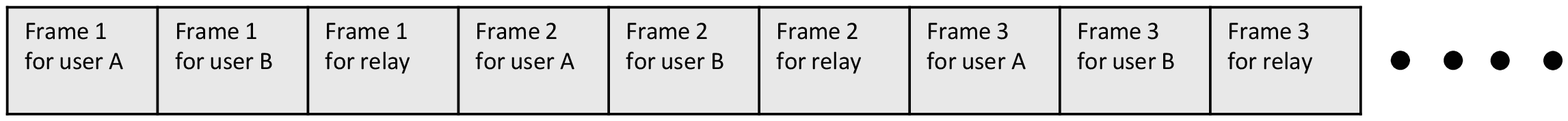}
\caption{The transmission time slots allocated for users and relay node.}\label{fig:timeslot}
\end{center}
\end{figure}
As illustrated in Figure~\ref{fig:timeslot}, in the first two time slots, users A and B transmit their blocks.
Then, the relay, based on its received signals ${\mathbf{y}_{\text{AR}}}$ and ${\mathbf{y}_{\text{BR}}}$, decodes the received blocks. Depending on
whether or not the relay can correctly decode the received blocks, which can be recognized by a error-detecting code (e.g., CRC codes), the relay decides upon cooperation with the two transmitters in the third slot, as shown in Figure~\ref{fig:timeslot} leading to four possible scenarios summarized in Table~\ref{table:relayscenarios}. 

\begin{table}[!h]
\begin{center}
\begin{tabular}{|c|c|c|}
\hline
Event& State Notation& Relay Action\\ \hline
Relay cannot decode none of its received blocks&${\text{s}_0}$ &Silence\\ \hline
Relay can decode only received block of user $A$& ${\text{s}_1}$& Transmitting $\hat{\mathbf{x}_{\text{A}}}$\\ \hline
Relay can decode only received block of user $B$&${\text{s}_2}$& Transmitting $\hat{\mathbf{x}_{\text{B}}}$\\ \hline
Relay can decode the received blocks of both users&${\text{s}_3}$ & Transmitting $\hat{\mathbf{x}_{\text{A}}}+\hat{\mathbf{x}_{\text{B}}}$\\ \hline
\end{tabular}
\end{center}
\caption{Four possible states at the relay node.}
\label{table:relayscenarios}
\end{table}

As can be inferred from Table~\ref{table:relayscenarios}, relay cooperation can be categorized into four states:
First, corresponding to ${\text{s}_0}$ , the relay
will be silent, and as a result the destination only receives the
direct block transmissions from two users. Hence, the bit error probability
is the same as that of a direct transmission with power ${P}$.
Second, corresponding to ${\text{s}_1}$  and ${\text{s}_2}$, the relay node only cooperates in transmission of one of the
users. Thereby, only one of the users benefits a second order diversity
transmission. Denoting this user by $u$ ($u\in{{\{\text{A,B}}\}}$),
the received signal at the destination from the relay will be
\begin{equation}\label{eq:case2-received} 
{\mathbf{y}_{\text{RD}}}={h_{\text{R,D}}}{\hat{\mathbf{x}}_u}+{\mathbf{n}_{\text{R}}^d},
\end{equation} 
where $\hat{\mathbf{x}}_u$ and $\mathbf{n}^r_{\text{D}}$ are relay's estimate of user's signal and additive noise at the destination, respectively. In this case, the bit error probability for the assisted user is equal to that of a Maximum Ratio Combining (MRC) receiver, \cite{chen2005analysis}, with two
independent receptions of the transmitted signal, and for the other user, it is similar to the first state.
Lastly, for ${\text{s}_3}$ , the relay can successfully decode both users' blocks. Then, the relay encodes and
modulates these signals separately, and transmits the sum of the
modulated analog signals. Thus, the destination node will receive the following signal
\begin{equation}\label{eq:received-dest-from-relay} {\mathbf{y}_{\text{RD}}}={h_{\text{R,D}}}{(\hat{\mathbf{x}}_{\text{A}}+\hat{\mathbf{x}}_{\text{B}})}+{\mathbf{n}_{\text{R}}^d}.
\end{equation}






\section{V-BLAST-Based Detection Algorithm}\label{sec:detection}

In this section, we propose a method for detection at the destination node. First, we assume that the destination knows fading coefficients, e.g. by acquiring
transmitting pilot symbols, and also knows the state of cooperation at the relay, e.g. it can be implemented by sending 2 bits encoding four states. For $\text{s}_0$ in Table~\ref{table:relayscenarios}, both users have interference-free direct transmission, and, as stated above, the conventional receiver is used. For $\text{s}_1$ ($\text{s}_2$), the user which is able to utilize relay resource, the destination uses MRC of the two received signals, one from the user and the other from the relay. In the last state,
we propose a novel method for detection relying on the basis of V-BLAST detection. First, from equations~\eqref{eq:received dest}-\eqref{eq:received-dest-from-relay}, we rewrite the three received signals in three time slots (see Figure~\ref{fig:timeslot}) at the destination in a following matrix format given by

\begin{equation}\label{eq:whole-received-matrix}
 \left( {\begin{array}{*{20}{c}}
   {\mathbf{y}_{\text{AD}}}  \\
 {\mathbf{y}_{\text{BD}}}  \\
  {\mathbf{y}_{\text{RD}}} \\
\end{array}} \right)= {\left( {\begin{array}{*{20}{c}}
   {h_{\text{A,D}}}  \\
   {0}  \\
   {h_{\text{R,D}}}  \\
\end{array}} \right)}{\mathbf{x}_{\text{A}}}
+{\left( {\begin{array}{*{20}{c}}
   {0}  \\
   {h_{\text{B,D}}}  \\
   {h_{\text{R,D}}}  \\
\end{array}} \right)}{\mathbf{x}_{\text{B}}}+{\left( {\begin{array}{*{20}{c}}
   {\mathbf{n}_{\text{A}}^d}  \\
   {\mathbf{n}_{\text{B}}^d}  \\
   {\mathbf{n}_{\text{R}}^d}  \\
\end{array}} \right)}\end{equation}

Without loss of generality, we focus on detecting user $\text{A}$'s block. Doing so, we first define the vector conveying user $\text{B}$'s as follows
\begin{equation}\label{10}
{\mathbf{u}_{\text{B}}}= \frac{1}{\sqrt{{{h_{\text{B,D}}}^2}+{{h_{\text{R,D}}}^2}}}{\left(
{\begin{array}{*{20}{c}}
   {0}  \\
   {h_{\text{B,D}}}  \\
   {h_{\text{R,D}}}  \\
\end{array}} \right)}.
\end{equation}
The null space of ${\mathbf{u}_{\text{B}}}$, denoted by $\mathbf{U} \buildrel \Delta \over = \left[ {\mathbf{u}_{\text{1}} \ {\rm{  }}\ \mathbf{u}_{\text{2}} } \right]$, can then be spanned by
\begin{equation}\label{eq:nullspace}
\begin{array}{l}
\mathbf{u}_{\text{1}}  \buildrel \Delta \over = \frac{1}{{\sqrt {|h_{B,D} |^2  + |h_{R,D} |^2 } }}\left( \begin{array}{c}
 0 \\
  - h_{\text{R,D}}^*  \\
 h_{\text{B,D}}^*  \\
 \end{array} \right){\rm{                                    }} ,\\
\mathbf{u}_{\text{2}}  \buildrel \Delta \over = \left( \begin{array}{c}
 1 \\
 0 \\
 0 \\
 \end{array} \right) .\\
 \end{array}
\end{equation}

It should be noted that, the matrix $\mathbf{U}$ defined in equation~\eqref{eq:nullspace} is one of many possible
null spaces for vector  $\mathbf{u}_{\text{A}}$. Now multiplying both sides of~\eqref{eq:whole-received-matrix} by transposed of matrix $\mathbf{U}$, one may obtain
\begin{equation}\label{eq:receive-signal-a}
\tilde{\mathbf{y}} \buildrel \Delta \over = \mathbf{U}^T \left( \begin{array}{c}
   {\mathbf{y}_{\text{AD}}}  \\
 {\mathbf{y}_{\text{BD}}}  \\
  {\mathbf{y}_{\text{RD}}} \\ \end{array} \right) = \left[ \begin{array}{c}
 \frac{{h_{\text{R,D}} h_{\text{B,D}} }}{{\sqrt {|h_{\text{B,D}} |^2  + |h_{\text{R,D}} |^2 } }} \\
 h_{\text{A,D}}  \\
 \end{array} \right]\mathbf{x}_{\text{A}}  + \left[ \begin{array}{c}
 \frac{{ - h_{\text{R,D}} }}{{\sqrt {|h_{\text{B,D}} |^2  + |h_{\text{R,D}} |^2 } }}\mathbf{n}_{\text{B}}^d  + \frac{{h_{\text{B,D}} }}{{\sqrt {|h_{\text{B,D}} |^2  + |h_{\text{R,D}} |^2 } }}\mathbf{n}_{\text{R}}^d  \\
 \mathbf{n}_{\text{A}}^d  \\
 \end{array} \right]{\rm{   }}
\end{equation}

As can be realized from equation~\eqref{eq:receive-signal-a}, we have two receptions for user $\text{A}$'s
signal. Owing to independency between $\mathbf{n}_{\text{A}}^d$, $\mathbf{n}_{\text{B}}^d$, and $\mathbf{n}_{\text{R}}^d$, we have

\begin{equation}
\begin{array}{l}
 \mathbf{n}^{\prime}_1  \buildrel \Delta \over = \frac{{ - h_{\text{R,D}} }}{{\sqrt {|h_{\text{B,D}} |^2  + |h_{\text{R,D}} |^2 } }}\mathbf{n}_{\text{B}}^d  + \frac{{h_{\text{B,D}} }}{{\sqrt {|h_{\text{B,D}} |^2  + |h_{\text{R,D}} |^2 } }}\mathbf{n}_{\text{R}}^d  \sim \mathcal{CN}(\mathbf{o},N_0\mathbf{I}_{n\times n}) \\
 \mathbf{n}^{\prime}_2  \buildrel \Delta \over = \mathbf{n}_{\text{A}}^d  \sim \mathcal{CN}(\mathbf{0},N_0\mathbf{I}_{n\times n} ) \\
 \end{array}
\end{equation}
Now, if we use the MRC method for detection of ${\mathbf{x}_{\text{A}}}$, we obtain

\begin{equation}
\tilde{\mathbf{y}}_{\text{A}}  \buildrel \Delta \over = (|h_{\text{A,D}} |^2  +
\frac{{|h_{\text{B,D}} h_{\text{R,D}} |^2 }}{{|h_{\text{B,D}} |^2  + |h_{\text{R,D}} |^2 }})\mathbf{x}_{\text{A}}
+ (\frac{{(h_{\text{R,D}} h_{\text{B,D}})^* }}{{|h_{\text{B,D}} |^2  + |h_{\text{R,D}} |^2
}}\mathbf{n}^{\prime}_1 + h_{\text{A,D}}^* \mathbf{n}_2^{\prime}  )
\end{equation}
A similar procedure can be employed for ${\mathbf{x}_{\text{B}}}$ by
multiplying two sides of equation~\eqref{eq:whole-received-matrix} by another matrix representing
the null space of the vector, conveying user $\text{A}$'s signal (similar to that of user $\text{B}$).
Then, the SNRs at the outputs of MRC receivers can be written as
\begin{equation}
\begin{array}{l}
 \text{SNR}_{\text{AD}} = (|h_{\text{A,D}} |^2  + \frac{{|h_{\text{B,D}} h_{\text{R,D}} |^2 }}{{|h_{\text{B,D}} |^2  + |h_{\text{R,D}} |^2 }})\frac{1}{N_0}P {\rm{  }} \\
 \text{SNR}_{\text{BD}}  = (|h_{\text{B,D}} |^2  + \frac{{|h_{\text{A,D}} h_{\text{R,D}} |^2 }}{{|h_{\text{A,D}} |^2  + |h_{\text{R,D}} |^2 }})\frac{1}{N_0}P {\rm{     }} \\
 \end{array}
\end{equation}
Computing these factors, the destination node chooses to decode the
user with the maximum SNR at first. Let ${u_{\text{max}}}$ denote the user with
the highest SNR, i.e.
\begin{equation}\label{eq:highest-snr}
\text{SNR}_{u_{\text{max}}^1\text{D}}  = \max \{ (|h_{\text{B,D}} |^2  + \frac{{|h_{\text{A,D}} h_{\text{R,D}} |^2 }}{{|h_{\text{A,D}} |^2
+ |h_{\text{R,D}} |^2 }})\frac{P}{N_0}, (|h_{\text{A,D}}|^2  + \frac{{|h_{\text{B,D}} h_{\text{R,D}} |^2
}}{{|h_{\text{B,D}} |^2  + |h_{\text{R,D}} |^2 }})\frac{P}{N_0}{\rm{ \}  }}
\end{equation}
For this user, the destination utilizes the MRC receiver. Then, the
effect of this user signal is subtracted from the received signal
${Y_{\text{RD}}}$, i.e. 
\[
{Y_{\text{RD}}}-h_{u_{\text{max}},\text{D}}\mathbf{x}_{u_{\text{max}}}.
\]
Then, conditioned on the detection of
the user $u_{\text{max}}$ being error-free, for the second user (denoted by
${u_{max}^{2}}$) a diversity order of two can be achieved, with the
following SNR
\begin{equation}\label{eq:second-highest-snr}
\text{SNR}_{{u_{max}^{2}}\text{D}}  = (|h_{_{{{{u_{max}^{2}}}},\text{D}} } |^2  + |h_{\text{R,D}} |^2 )\frac{P}{N_0}
\end{equation}
\begin{cor}[Generalization] Denoting the set of all users sharing a relay node in a MARC by $\mathcal{U}$, the SNR of the user to be decoded (at the destination) first is given by
\begin{equation}\label{eq:cor-general}
\text{SNR}_{u_{\text{max}}^1\text{D}}  = \underset{{u \in {\mathcal{U}}}}{\max}\text{ }
\Big[|h_{u,\text{D}} |^2  + \frac{1}{\underset{{\scriptstyle v \in {{\mathcal{U}}\backslash u}}}{\sum}{\frac{1}{{|h_{v,\text{D}} |^2 }} + \frac{1}{{|h_{\text{R,D}} |^2 }}} }\Big]\frac{P}{N_0}
\end{equation}
\end{cor}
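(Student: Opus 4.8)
The plan is to reduce the claim to a single linear-algebra computation: the output SNR of the first zero-forcing-plus-MRC stage of the V-BLAST detector equals $(P/N_0)$ times the squared norm of the projection of the desired user's channel vector onto the orthogonal complement of the interference it must null. First I would write the general received-signal model. With $|\mathcal{U}| = n$ users and one relay in state $\text{s}_3$, stacking the $n$ direct receptions and the single relay reception gives an $(n+1)$-dimensional observation $\mathbf{y} = \sum_{u\in\mathcal{U}} \mathbf{a}_u \mathbf{x}_u + \mathbf{n}$, where $\mathbf{a}_u = h_{u,\text{D}}\,\mathbf{e}_u + h_{\text{R,D}}\,\mathbf{e}_{\text{R}}$, $\mathbf{e}_u$ is the unit vector of user $u$'s time slot, $\mathbf{e}_{\text{R}}$ that of the relay slot, and $\mathbf{n}\sim\mathcal{CN}(\mathbf{0},N_0\mathbf{I})$. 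This is the exact generalization of~\eqref{eq:whole-received-matrix}; the key structural facts are that every interfering channel vector shares the same relay component $h_{\text{R,D}}\mathbf{e}_{\text{R}}$, while the direct parts $h_{v,\text{D}}\mathbf{e}_v$ are mutually orthogonal and orthogonal to $\mathbf{e}_{\text{R}}$.

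Next I would invoke the detector structure of~\eqref{eq:nullspace}--\eqref{eq:second-highest-snr}: to decode user $u$ first, the receiver projects $\mathbf{y}$ onto the orthogonal complement of the interference subspace $\mathcal{I}_u = \mathrm{span}\{\mathbf{a}_v : v\in\mathcal{U}\setminus u\}$ and then applies MRC. Because this projector $\mathbf{P}_{\perp}$ is orthogonal and $\mathbf{n}$ is white, the post-nulling noise stays white with variance $N_0$ on the range of $\mathbf{P}_\perp$, so, exactly as in the two-user derivation, the MRC output SNR is $\text{SNR}_{u\text{D}} = (P/N_0)\,\|\mathbf{P}_\perp\mathbf{a}_u\|^2$. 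The corollary then follows once I evaluate $\|\mathbf{P}_\perp\mathbf{a}_u\|^2 = \|\mathbf{a}_u\|^2 - \|\mathbf{P}_{\mathcal{I}_u}\mathbf{a}_u\|^2$ and maximize over $u$.

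The main computational step, and the only real obstacle, is projecting $\mathbf{a}_u$ onto $\mathcal{I}_u$. Here I would exploit the structure above: since the direct part $h_{u,\text{D}}\mathbf{e}_u$ is orthogonal to every generator of $\mathcal{I}_u$, only the shared relay component survives, giving $\mathbf{P}_{\mathcal{I}_u}\mathbf{a}_u = h_{\text{R,D}}\,\mathbf{P}_{\mathcal{I}_u}\mathbf{e}_{\text{R}}$. To compute $\|\mathbf{P}_{\mathcal{I}_u}\mathbf{e}_{\text{R}}\|^2$ I would exhibit the one-dimensional orthogonal complement of $\mathcal{I}_u$ inside $\mathrm{span}(\{\mathbf{e}_v\}_{v\ne u}\cup\{\mathbf{e}_{\text{R}}\})$, namely $\mathbf{w} = \mathbf{e}_{\text{R}} - \sum_{v\ne u}(\overline{h_{\text{R,D}}}/\overline{h_{v,\text{D}}})\,\mathbf{e}_v$, which satisfies $\langle \mathbf{a}_v,\mathbf{w}\rangle = 0$ for all $v\in\mathcal{U}\setminus u$. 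Then $\|\mathbf{P}_\perp\mathbf{e}_{\text{R}}\|^2 = |\langle\mathbf{e}_{\text{R}},\mathbf{w}\rangle|^2/\|\mathbf{w}\|^2 = \big(1+|h_{\text{R,D}}|^2\sum_{v\ne u}|h_{v,\text{D}}|^{-2}\big)^{-1}$ and $\|\mathbf{P}_{\mathcal{I}_u}\mathbf{e}_{\text{R}}\|^2 = 1 - \|\mathbf{P}_\perp\mathbf{e}_{\text{R}}\|^2$.

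Substituting back, $\|\mathbf{P}_\perp\mathbf{a}_u\|^2 = |h_{u,\text{D}}|^2 + |h_{\text{R,D}}|^2 - |h_{\text{R,D}}|^2\|\mathbf{P}_{\mathcal{I}_u}\mathbf{e}_{\text{R}}\|^2$, and dividing numerator and denominator of the surviving fraction by $|h_{\text{R,D}}|^2$ collapses it to $|h_{u,\text{D}}|^2 + \big(\sum_{v\ne u}|h_{v,\text{D}}|^{-2}+|h_{\text{R,D}}|^{-2}\big)^{-1}$. Multiplying by $P/N_0$ and taking the maximum over $u\in\mathcal{U}$ yields~\eqref{eq:cor-general}. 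As a sanity check I would confirm that for $|\mathcal{U}|=2$ this reproduces~\eqref{eq:highest-snr}, since the single-term harmonic combination equals $|h_{v,\text{D}}h_{\text{R,D}}|^2/(|h_{v,\text{D}}|^2+|h_{\text{R,D}}|^2)$; the appearance of the parallel (harmonic) combination of the other users' gains with $|h_{\text{R,D}}|^2$ is the natural generalization of that two-user term, reflecting how the interferers dilute the useful relay energy available for user $u$.
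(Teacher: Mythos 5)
Your derivation is correct, and it is worth noting that the paper itself supplies no proof of this corollary: the formula is asserted by analogy with the explicit two-user computation in~\eqref{eq:nullspace}--\eqref{eq:highest-snr}, where the null-space basis $\mathbf{U}$ is written out by hand. Your projector argument is the natural coordinate-free generalization of that computation. The key identity $\|\mathbf{P}_\perp\mathbf{a}_u\|^2=|h_{u,\text{D}}|^2+|h_{\text{R,D}}|^2/\|\mathbf{w}\|^2$ with $\|\mathbf{w}\|^2=1+|h_{\text{R,D}}|^2\sum_{v\ne u}|h_{v,\text{D}}|^{-2}$ checks out (the vector $\mathbf{w}$ is indeed orthogonal to every $\mathbf{a}_v$, $v\ne u$, under the standard Hermitian inner product), it collapses to the stated harmonic form, and it reproduces~\eqref{eq:highest-snr} for two users. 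What your route buys over the paper's implicit one is that it isolates the only structural facts that matter --- the interferers share the relay component while the direct components are mutually orthogonal --- so the result follows for any $|\mathcal{U}|$ without constructing an explicit orthonormal basis of the $(n-1)$-dimensional null space. Two minor points to make explicit: the construction of $\mathbf{w}$ divides by $\overline{h_{v,\text{D}}}$, so you should state the (almost-sure) genericity assumption that all gains are nonzero, and since the gains are complex the nulling/MRC steps must use conjugate transposes throughout (the paper's $\mathbf{U}^{T}$ is itself loose on this point).
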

After decoding the first user's signal, its effect will be subtracted from $\mathbf{y}_{\text{RD}}$. Then, the next best user is found, similar to equation~\eqref{eq:cor-general} except the fact that, now we have one user less the initial case. A detailed description of the proposed detection algorithm is illustrated in Algorithm~\ref{alg:detection}.  
\begin{algorithm}[!h]\small
\caption{V-BLAST approach for detection of multi-user MARC}
\begin{algorithmic}\label{alg:detection}
\STATE {{\textbf{Input:}} Channel Coefficients ($h_{u,\text{D}},u\in\mathcal{U}$ and $h_{\text{R,D}}$), $\mathbf{y}_{u\text{D}},u\in\mathcal{U}$, $\mathbf{y}_{\text{RD}}$}
\STATE{{\textbf{Output:}} $\mathbf{x}_u, u\in\mathcal{U}$}
\STATE{{\textbf{Initialize:}} $\mathcal{U}_{new}\leftarrow \mathcal{U}$, $\mathbf{y}^{new}_{\text{RD}}\leftarrow\mathbf{y}_{\text{RD}}$}
\FOR{$i=1$ to $|\mathcal{U}|$}
\STATE{Find the user with ${u_{\text{max}}^{i}}  = \underset{{u \in {\mathcal{U}}_{new}}}{\arg\max}\text{ }
\Big[|h_{u,\text{D}} |^2  + \frac{1}{\underset{{\scriptstyle v \in {{\mathcal{U}}\backslash u}}}{\sum}{\frac{1}{{|h_{v,\text{D}} |^2 }} + \frac{1}{{|h_{\text{R,D}} |^2 }}} }\Big]$}
\STATE{Decode the signal of $u_{\text{max}}^i$, $\hat{\mathbf{x}}_{{u_{max}^{i}}}$ using nulling technique for two signals $\mathbf{y}_{\text{RD}}^{new}$ and $\mathbf{y}_{{u_{max}^{i}}\text{D}}$}
\STATE{$\mathbf{y}_{\text{RD}}^{new}\leftarrow \mathbf{y}_{\text{RD}}-\hat{\mathbf{x}}_{u_{max}^i}h_{{u_{max}^{i}}}$}
\STATE{$\mathcal{U}_{new} \leftarrow \mathcal{U}_{new}\backslash {u_{max}^{i}}$}
\ENDFOR
\STATE{Return $\mathbf{x}_u,u\in\mathcal{U}$}
\end{algorithmic}
\end{algorithm}

\section{Performance Analysis in the Error-Propagation-Free state}\label{sec:perfanal}

In this section, we evaluate the performance of the proposed
receiver. We assume that the user-relay channel (URC) is stronger
than both the user- and relay-destination channels (that is ${{\Omega_{\rightarrow \text{R}}}}  >
{{\Omega_{\rightarrow \text{D}}}}$ ). We also assume that whenever the relay cooperates, it
cooperates correctly. This assumption sounds, since if a strong CRC
(Cyclic Redundancy Check) code is used, the relay can can detect all users' signals. Hence, if the relay detects an error
in its received codeword, it discards the whole block and does not
cooperate. In the following, we assume a BPSK
modulation.\footnote{The proposed framework can work for any modulation.
For simplicity of performance evaluation, we have considered a BPSK
modulation.} For the BEP evaluation, we first compute the
probabilities of each state summarized in Table~\ref{table:relayscenarios} as follows
\begin{equation}\label{eq:state-probabilities-defn}
\begin{array}{l}
 p_0  = \Pr (s_0) = p_{\text{AR}}p_{\text{BR}}  \\
 p_1  = \Pr (s_1) = (1 - p_{\text{AR}})p_{\text{BR}}  \\
 p_2  = \Pr (s_2) = p_{\text{AR}}(1 - p_{\text{BR}}) \\
 p_3  = \Pr (s_3) = (1 - p_{\text{AR}})(1 - p_{\text{BR}}), \\
 \end{array}\end{equation}
where  ${p_{AR}}$ and ${(p_{BR})}$ are the probabilities that the relay
cannot correctly decode user A  (B) transmitted block. These
probabilities can be assumed to be negligible at high SNRs. In this
section, we derive the BEP for an uncoded transmission, and then we
generalize our result to a convolutionally coded transmission, in
which the user's data are first convolutionally coded and then the
coded block is transmitted. 

Here we derive the BEP for user $u\in\mathcal{U}$ in the illustration in Figure~\ref{fig:marc-view}. Due to the symmetry in the structure, it will be the same for user B. We state the main result for the non-coded scheme in the following theorem:
\begin{thm}\label{thm:result1} Let us denote the BEP for user $u\in\mathcal{U}$ in a MARC by $P_{b,u}$. Moreover, assume that the relay cooperation is error-free. Then, for i.i.d. Rayleigh block fading channels between users (and
relay) to destination with parameter $ {\Omega_{\rightarrow \text{D}}},$ we have
\begin{equation}\label{eq:thm1}
\begin{array}{ll}
P_{b,u}&\leq \frac{1}{2}\Big[(p_0  + p_2 )\frac{\Omega_{\rightarrow \text{D}} }{{\text{snr}_0 + \Omega_{\rightarrow \text{D}} }} + p_1 \left( {\frac{\Omega_{\rightarrow \text{D}} }{{\text{snr}_0 + \Omega_{\rightarrow \text{D}}}}} \right)^2 \Big] + \frac{p_3}{4}\Big[(\frac{\Omega_{\rightarrow \text{D}} }{{\Omega_{\rightarrow \text{D}}  + \text{snr}_0}}\sqrt \pi  [\frac{{24\Omega_{\rightarrow \text{D}}}}{{{{\Gamma (2.5)}}(4\Omega_{\rightarrow \text{D}}  + \text{snr}_0)^3 }}F\left( {3,1.5;2.5;\frac{\text{snr}_0}{{4\Omega_{\rightarrow \text{D}}  + \text{snr}_0}}} \right) \\
  &+ \frac{4}{{{{\Gamma (2.5)}}(4\Omega_{\rightarrow \text{D}} + \text{snr}_0)^2 }}F\left( {2,0.5;2.5;\frac{\text{snr}_0}{{4\Omega_{\rightarrow \text{D}}  + \text{snr}_0}}} \right)] + (\frac{\Omega_{\rightarrow \text{D}} }{{\Omega_{\rightarrow \text{D}}  + \text{snr}_0}})^2 {\rm{)}}\Big]\\
 \end{array}
\end{equation}
where $\text{snr}_0:=\frac{P}{{N_0 }}$ and $ F(a,b;c;z)$ is the generalized Hypergeometric function \cite{integraltables}
and $p_i$'s are defined in equation~\eqref{eq:state-probabilities-defn}.
\end{thm}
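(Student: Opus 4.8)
The plan is to condition on the four relay states of Table~\ref{table:relayscenarios} and apply the law of total probability, writing $P_{b,u}=\sum_{i=0}^{3}p_i\,P_{b,u\mid s_i}$, so that it suffices to bound each conditional BEP separately. In every state the detector reduces, via the nulling/MRC steps of Section~\ref{sec:detection}, to BPSK reception over a channel whose instantaneous SNR is $\gamma=\text{snr}_0\,g$ for some function $g$ of the squared fading magnitudes. I would bound each conditional BEP by the Gaussian-tail (Chernoff) inequality $Q(\sqrt{2\gamma})\le\tfrac12 e^{-\gamma}$, reducing every term to a moment generating function evaluated at $-1$, i.e.\ $P_{b,u\mid s_i}\le\tfrac12\,\text{E}[e^{-\text{snr}_0 g}]$. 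The one building block used throughout is that, for $|h_{\cdot,\text{D}}|^2$ exponential with parameter $\Omega_{\rightarrow\text{D}}$, $\text{E}[e^{-\text{snr}_0|h_{\cdot,\text{D}}|^2}]=\Omega_{\rightarrow\text{D}}/(\text{snr}_0+\Omega_{\rightarrow\text{D}})$.

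The easy states come first. In $s_0$ and in $s_2$ user $u$ sees only its direct link, a single Rayleigh branch, so $P_{b,u\mid s_0}=P_{b,u\mid s_2}\le\tfrac12\,\Omega_{\rightarrow\text{D}}/(\text{snr}_0+\Omega_{\rightarrow\text{D}})$, which collects into the $(p_0+p_2)$ term. In $s_1$ the relay forwards $u$'s block, so the MRC output combines the two independent branches $|h_{u,\text{D}}|^2$ and $|h_{\text{R,D}}|^2$; by independence the MGF factorizes and $P_{b,u\mid s_1}\le\tfrac12\big(\Omega_{\rightarrow\text{D}}/(\text{snr}_0+\Omega_{\rightarrow\text{D}})\big)^2$, giving the $p_1$ term with its characteristic diversity-two slope.

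State $s_3$ is where the work lies, and here I would split on the V-BLAST ordering. By the symmetry of the i.i.d.\ links, user $u$ is decoded second with probability $\tfrac12$, and under the error-propagation-free assumption its post-cancellation SNR is the clean two-branch quantity $(|h_{u,\text{D}}|^2+|h_{\text{R,D}}|^2)\text{snr}_0$, contributing $\tfrac12\cdot\tfrac12\big(\Omega_{\rightarrow\text{D}}/(\text{snr}_0+\Omega_{\rightarrow\text{D}})\big)^2$, which is the $(\Omega_{\rightarrow\text{D}}/(\Omega_{\rightarrow\text{D}}+\text{snr}_0))^2$ summand inside the $\tfrac{p_3}{4}$ bracket. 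When $u$ is decoded first its SNR is $\big(|h_{u,\text{D}}|^2+W\big)\text{snr}_0$ with the harmonic-type combination $W:=|h_{v,\text{D}}|^2|h_{\text{R,D}}|^2/(|h_{v,\text{D}}|^2+|h_{\text{R,D}}|^2)$ read off from~\eqref{eq:highest-snr}; dropping the ordering indicator (a valid upper bound since the integrand is nonnegative) and using independence of $|h_{u,\text{D}}|^2$ from $W$, this term factorizes as $\tfrac12\big(\Omega_{\rightarrow\text{D}}/(\text{snr}_0+\Omega_{\rightarrow\text{D}})\big)\,\text{E}[e^{-\text{snr}_0 W}]$.

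The main obstacle is the remaining expectation $\text{E}[e^{-\text{snr}_0 W}]$. I would first obtain the law of $W$: since $1/W=1/|h_{v,\text{D}}|^2+1/|h_{\text{R,D}}|^2$, a direct evaluation of $\Pr(W>w)$ reduces to the classical integral $\int_0^\infty e^{-\Omega_{\rightarrow\text{D}}(t+w^2/t)}\,dt=2w\,K_1(2\Omega_{\rightarrow\text{D}} w)$, giving the survival function $\overline{F}_W(w)=2\Omega_{\rightarrow\text{D}}\,w\,e^{-2\Omega_{\rightarrow\text{D}} w}K_1(2\Omega_{\rightarrow\text{D}} w)$ and hence a density expressed through the modified Bessel functions $K_0,K_1$. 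Substituting into $\text{E}[e^{-\text{snr}_0 W}]=\int_0^\infty e^{-\text{snr}_0 w}f_W(w)\,dw$ leaves Laplace-type integrals $\int_0^\infty e^{-pw}w^{\mu-1}K_\nu(2\Omega_{\rightarrow\text{D}} w)\,dw$ with $p=\text{snr}_0+2\Omega_{\rightarrow\text{D}}$, which are tabulated and evaluate to Gauss hypergeometric functions; with $c=2\Omega_{\rightarrow\text{D}}$ the ratio $(p-c)/(p+c)$ becomes exactly $\text{snr}_0/(4\Omega_{\rightarrow\text{D}}+\text{snr}_0)$ and $p+c=4\Omega_{\rightarrow\text{D}}+\text{snr}_0$, producing the $F(3,1.5;2.5;\cdot)$ and $F(2,0.5;2.5;\cdot)$ terms together with the $\sqrt\pi$ and $\Gamma(2.5)$ prefactors. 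Reassembling the four states then yields~\eqref{eq:thm1}. The delicate point to flag is the $s_3$ ordering: discarding the indicator in the first-decoded case is a clean bound, whereas the second-decoded contribution is obtained by decoupling the cancellation SNR from the (correlated) ordering event, which is the step to argue carefully or to accept as the source of the bound's mild looseness, consistent with the simulations.
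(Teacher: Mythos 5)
Your proposal follows essentially the same route as the paper's proof: condition on the four relay states, Chernoff-bound each conditional BEP into a moment generating function, treat $s_0,s_2$ as one Rayleigh branch and $s_1$ as two, split $s_3$ by decoding order with probability $\tfrac12$ each, and evaluate the law of the harmonic term $W$ via its Bessel-function survival function followed by the tabulated Laplace integrals that yield the hypergeometric terms (this is exactly the paper's Lemma~\ref{lem:probability-lemma} and the integral identity it invokes). The decoupling of the V-BLAST ordering event from the SNR expressions in state $s_3$, which you rightly flag as the delicate step, is handled no more carefully in the paper itself, which simply averages the two conditional $Q$-functions with equal weights.
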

\begin{proof} 
Please refer to Appendix~\ref{app:thmresult1}. 
\end{proof} 
In the following theorem, we generalize our results for a system which uses the same convolution codes for source and relay transmissions.

\begin{thm}\label{thm:result2} For a Convolutionally coded system using a code with rate $r$  and free distance ${d_{free}}$ and $k$ bits in each coded transmitted block, when utilizing a hard-input Viterbi decoder, at high SNR, the bit error rate(BER) has the following  bound
\begin{equation}\label{eq:thm2}
\begin{array}{ll}
 P_{b,u} &\leq \alpha_{c}\{(\frac{1}{2}\Big[(p_0  + p_2)\frac{\Omega_{\rightarrow D} }{{\text{snr}_0^c + \Omega_{\rightarrow D}}} + p_1\left( {\frac{\Omega_{\rightarrow D} }{{\text{snr}_0^c + \Omega_{\rightarrow D} }}} \right)^2 \Big] {\rm{ }} + \frac{p_3}{4}\Big[(\frac{\Omega_{\rightarrow D} }{{\Omega_{\rightarrow D}  + \text{snr}_0^c}}\sqrt \pi  [\frac{{4\Omega_{\rightarrow D} }}{{\Gamma (2.5)(24\Omega_{\rightarrow D}  + \text{snr}_0^c)^3 }}F( {3,1.5;2.5;\frac{\text{snr}_0^c}{{4\Omega_{\rightarrow D}  + \text{snr}_0^c}}})\\& + \frac{1}{{(4\Omega_{\rightarrow D}  + \text{snr}_0^c)^2 }}\frac{4}{{\Gamma (2.5)}}F({2,0.5;2.5;\frac{\text{snr}_0^c}{{4\Omega_{\rightarrow D}  + \text{snr}_0^c}}})] + (\frac{\Omega_{\rightarrow D} }{{\Omega_{\rightarrow D}  + \text{snr}_0^c}})^2 \Big]\}\\
 \end{array}
\end{equation}
where $\alpha_{c}=\frac{{2^{\frac{{d_{free} }}{2}} .B_{d_{free} } }}{k}$ and $\text{snr}_0^c:=\frac{P}{{N_0 }} \times \frac{{rd_{free} }}{2}$, in which ${B_{d_{free}}}$ is the number of codewords with weight
${d_{free}}$.
\end{thm}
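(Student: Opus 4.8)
The plan is to reduce the coded bit-error bound to the uncoded bound of Theorem~\ref{thm:result1} by conditioning on the fading and invoking the standard transfer-function union bound for Viterbi decoding. First I would condition on a fixed realization of the channel gains. Under the error-propagation-free assumption, each of the four relay states $s_0,\ldots,s_3$ reduces the link to user $u$ to an effective single-user channel whose instantaneous post-combining SNR $\gamma$ was already identified in Section~\ref{sec:detection} (for instance $\gamma=\text{snr}_0|h_{u,\text{D}}|^2$ in the no-help states, and the nulling-plus-MRC expression $\text{snr}_0(|h_{u,\text{D}}|^2+\tfrac{|h_{v,\text{D}}h_{\text{R,D}}|^2}{|h_{v,\text{D}}|^2+|h_{\text{R,D}}|^2})$ in state $s_3$). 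On this conditioned channel the coded bit-error probability obeys the transfer-function bound $P_{b,u}\le\frac{1}{k}\sum_{d\ge d_{free}}B_d\,P_2(d)$, where $P_2(d)$ is the pairwise error probability between two trellis paths differing in $d$ coded positions. Because at high SNR $P_2(d)$ decays geometrically in $d$, the sum is dominated by the free-distance term, so $P_{b,u}\lesssim\frac{B_{d_{free}}}{k}P_2(d_{free})$.

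The key step is to bound $P_2(d_{free})$ so that its fading average has exactly the functional form computed in Theorem~\ref{thm:result1}. For hard-input Viterbi decoding the decoder sees a BSC with instantaneous crossover probability $p=Q(\sqrt{2r\gamma})$, the factor $r$ coming from the rate-$r$ energy normalization that splits the per-information-bit energy over $1/r$ coded symbols. The Bhattacharyya bound gives $P_2(d_{free})\le[4p(1-p)]^{d_{free}/2}\le(4p)^{d_{free}/2}$, and using $Q(\sqrt{2r\gamma})\le\tfrac12 e^{-r\gamma}$ this becomes $2^{d_{free}/2}e^{-(d_{free}r/2)\gamma}$. Now observe that in the uncoded analysis the conditional error was bounded by $\tfrac12 e^{-\gamma}$, and every term of the Theorem~\ref{thm:result1} bracket arises by averaging such an exponential over the fading distribution of $\gamma$ in the corresponding state. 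Since $\gamma$ is linear in $\text{snr}_0$, multiplying the exponent by $d_{free}r/2$ is identical to replacing $\text{snr}_0$ by $\text{snr}_0^c=\text{snr}_0\cdot\tfrac{rd_{free}}{2}$ everywhere; hence the moment-generating-function integrals and the resulting generalized hypergeometric evaluations carry over verbatim with $\text{snr}_0\to\text{snr}_0^c$.

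Collecting the pieces, the fading-averaged coded bound equals $\frac{B_{d_{free}}}{k}2^{d_{free}/2}$ times the Theorem~\ref{thm:result1} expression evaluated at $\text{snr}_0^c$, weighted by the same state probabilities $p_0,\ldots,p_3$; identifying the prefactor with $\alpha_c=\frac{2^{d_{free}/2}B_{d_{free}}}{k}$ yields~\eqref{eq:thm2}. I expect the main obstacle to lie in the two places where the reduction is not purely mechanical: justifying at high SNR that truncating the weight spectrum to the $d_{free}$ term is legitimate (i.e. that the remaining error events have strictly higher diversity and are negligible), and verifying that the Bhattacharyya exponent is exactly linear in $\gamma$ for the composite state $s_3$, so that the substitution $\text{snr}_0\to\text{snr}_0^c$ genuinely reproduces the same hypergeometric functions rather than merely an expression of the same order. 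Tracking the BPSK constants $\tfrac12$ and $\tfrac14$ cleanly through the Chernoff/Bhattacharyya step is the remaining bookkeeping that fixes the precise prefactor $\alpha_c$.
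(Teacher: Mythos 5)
Your proposal follows essentially the same route as the paper: invoke the hard-decision union bound truncated at the free-distance term (the paper simply cites $p_b\simeq\frac{1}{k}B_{d_{free}}2^{d_{free}}p^{d_{free}/2}$ from Proakis, which is exactly your Bhattacharyya step), apply $Q(x)\le\frac12 e^{-x^2/2}$ per relay state, and observe that the exponent scaling by $rd_{free}/2$ turns the Theorem~1 fading averages into the same expressions with $\text{snr}_0\to\text{snr}_0^c$. Your write-up is in fact somewhat more careful than the paper's about justifying the spectrum truncation and the origin of the rate factor $r$, but the decomposition, the bounds used, and the resulting prefactor $\alpha_c$ are identical.
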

\begin{proof} Please refer to Appendix~\ref{app:thmresult2}.\end{proof}

\begin{cor}[Asymptotic behvior]
For large SNRs (large values of $\text{snr}_0$ or $\text{snr}_0^c$), as  ${ \text{snr}_0 (\text{snr}_0)^c \to \infty} $, we have
\begin{equation}
\mathop {\lim }\limits_{\text{snr}_0 \to \infty } P_{b,u}  \leq \mathop {\lim
}\limits_{\text{snr}_0\to \infty } \{ \alpha (p_0  +
p_2)\frac{\lambda }{{\text{snr}_0 + \lambda }} +
(\beta p_1 + \gamma p_3)\left( {\frac{1}{{\text{snr}_0 + \lambda }}} \right)^2  +
p_3 o\left( {\left( {\frac{1}{{\text{snr}_0 + \lambda
}}} \right)^2 } \right)\} {\rm{ }},
\end{equation}
where $\alpha$, $\beta$, and $\gamma$ are some positive constants, and $g(x)=o({f(x)})$ means $\mathop
{\lim }\limits_{x \to \infty } \frac{f(x)}{g(x)}=0$.
\end{cor}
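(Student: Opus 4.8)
The plan is to read the asymptotics directly off the bound of Theorem~\ref{thm:result1}, since the corollary is simply the leading-order form of the right-hand side of~\eqref{eq:thm1}: Theorem~\ref{thm:result1} already gives $P_{b,u}\le B(\text{snr}_0)$ for an explicit $B$, so it suffices to expand $B(\text{snr}_0)$ as $\text{snr}_0\to\infty$ and match it to the stated expression. The coded case of Theorem~\ref{thm:result2} is identical after substituting $\text{snr}_0^c$ for $\text{snr}_0$ and retaining the prefactor $\alpha_c$, so I would treat only the uncoded bound. Writing $\lambda:=\Omega_{\rightarrow \text{D}}$, the first two summands of~\eqref{eq:thm1} are already in the target form: the $(p_0+p_2)$ term equals $\frac{1}{2}(p_0+p_2)\frac{\lambda}{\text{snr}_0+\lambda}$, giving $\alpha=\frac{1}{2}$, and the $p_1$ term equals $\frac{\lambda^2}{2}p_1\big(\frac{1}{\text{snr}_0+\lambda}\big)^2$, giving $\beta=\lambda^2/2$. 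All the nontrivial work is therefore in showing that the $p_3$ bracket decays like $\big(\frac{1}{\text{snr}_0+\lambda}\big)^2$ up to a strictly smaller remainder, which produces $\gamma$ and the $o(\cdot)$ term.

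To analyze the $p_3$ bracket I would use the $z\to 1^-$ behavior of the Gauss hypergeometric function $F(a,b;c;z)$, which is governed by the sign of $c-a-b$ \cite{integraltables}. Here the argument is $z=\frac{\text{snr}_0}{4\lambda+\text{snr}_0}$, so $1-z=\frac{4\lambda}{4\lambda+\text{snr}_0}\sim\frac{4\lambda}{\text{snr}_0}\to 0$. For the first factor $c-a-b=2.5-3-1.5=-2<0$, so $F(3,1.5;2.5;z)\sim K_1(1-z)^{-2}\sim K_1\big(\frac{\text{snr}_0}{4\lambda}\big)^2$ with the explicit positive constant $K_1=\frac{\Gamma(2.5)\Gamma(2)}{\Gamma(3)\Gamma(1.5)}$. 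For the second factor $c-a-b=2.5-2-0.5=0$, the borderline case, so the divergence is only logarithmic, $F(2,0.5;2.5;z)\sim K_2\log\frac{1}{1-z}\sim K_2\log\frac{\text{snr}_0}{4\lambda}$ with $K_2=\frac{\Gamma(2.5)}{\Gamma(2)\Gamma(0.5)}>0$.

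Substituting these into the $p_3$ bracket and collecting powers of $\text{snr}_0$ closes the argument. The prefactor $\frac{\lambda}{\lambda+\text{snr}_0}\cdot\frac{24\lambda}{\Gamma(2.5)(4\lambda+\text{snr}_0)^3}$ multiplying the first hypergeometric decays like $\text{snr}_0^{-4}$, while $(1-z)^{-2}$ grows like $\text{snr}_0^{2}$, so their product is of exact order $\text{snr}_0^{-2}$ and supplies $\gamma$; the stand-alone term $\big(\frac{\lambda}{\lambda+\text{snr}_0}\big)^2\sim\lambda^2\text{snr}_0^{-2}$ is of the same order and is absorbed into $\gamma$ as well. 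By contrast the prefactor multiplying the second hypergeometric decays like $\text{snr}_0^{-3}$, so with the logarithm its contribution is of order $\frac{\log\text{snr}_0}{\text{snr}_0^{3}}=o(\text{snr}_0^{-2})$, which is precisely the $p_3\,o\big((\frac{1}{\text{snr}_0+\lambda})^2\big)$ remainder. Replacing $\text{snr}_0^{-2}$ by the asymptotically equivalent $(\text{snr}_0+\lambda)^{-2}$, and observing that $\lambda>0$, $\Gamma(2.5)>0$ and $K_1,K_2>0$ force $\alpha,\beta,\gamma>0$, recovers the claimed form.

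The main obstacle is obtaining the correct $z\to1^-$ asymptotics of the two hypergeometric functions, and in particular correctly placing each in its regime: $F(3,1.5;2.5;\cdot)$ lies in the genuine power-divergence case $c-a-b<0$, whereas $F(2,0.5;2.5;\cdot)$ lies in the delicate borderline case $c-a-b=0$ with only logarithmic growth. Matching these divergences against the polynomially decaying prefactors is exactly what pins down both the surviving $\text{snr}_0^{-2}$ term and the subdominant remainder; an error in the regime identification, or in the exponent of $(1-z)$, would change the apparent diversity order, so this is the step that must be carried out with care.
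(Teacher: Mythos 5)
Your proposal is correct, and in fact it supplies more than the paper does: the corollary is stated without any proof, followed only by the informal remark that the dominant terms decay as the inverse square of the SNR. Your route --- reading the asymptotics directly off the right-hand side of Theorem~\ref{thm:result1} (and its coded analogue in Theorem~\ref{thm:result2} with $\text{snr}_0^c$ in place of $\text{snr}_0$) --- is the intended and essentially only sensible one, and you identify $\alpha=\tfrac12$ and $\beta=\lambda^2/2$ correctly from the first two summands. The genuinely delicate step, which the paper silently skips, is the $z\to1^-$ behavior of the two hypergeometric factors at $z=\text{snr}_0/(4\lambda+\text{snr}_0)$, and you handle it correctly: $F(3,1.5;2.5;z)\sim\frac{\Gamma(2.5)\Gamma(2)}{\Gamma(3)\Gamma(1.5)}(1-z)^{-2}$ since $c-a-b=-2<0$, which against the $\text{snr}_0^{-4}$ prefactor yields the surviving $(\text{snr}_0+\lambda)^{-2}$ contribution to $\gamma$ (together with the stand-alone $(\lambda/(\lambda+\text{snr}_0))^2$ term), while $F(2,0.5;2.5;z)$ sits in the borderline case $c-a-b=0$ and contributes only $O(\text{snr}_0^{-3}\log\text{snr}_0)=o\bigl((\text{snr}_0+\lambda)^{-2}\bigr)$, which is exactly the stated remainder. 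The positivity of $\alpha,\beta,\gamma$ follows as you note from $\lambda>0$ and the positivity of the Gamma-function ratios, so the argument is complete.
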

Thus, since at large SNRs, ${\rm p_0}$ and ${\rm p_1}$ will be negligible compared to ${\rm p_3,}$ the dominant terms will be the second and third terms which decrease inversely with squared SNR. As a result, with the definition of the diversity order, the proposed scheme can achieve second order of diversity.

\section{Numerical Experiments}\label{sec:numexp}

In this section, we present some numerical results to evaluate the performance of the proposed scheme. We set $\Omega_{\rightarrow \text{R}}=1$ and
change $\Omega_{\rightarrow \text{D}}$ to consider various link qualities (it will be discussed later). First we investigate the uncoded scheme. Figure~\ref{fig:simul1} shows the plots of BEP versus SNR per bit for direct transmission, Alamouti scheme using 2 transmitting antennas, and our proposed scheme. For our scheme, the plot based on the analytical result of Theorem 1 has been included as well. In this plots, we have assumed ideal user and relay channel (URC). That is users to relay links are error free.
\begin{figure}[htp]
\centering
\includegraphics[width=0.8\textwidth]{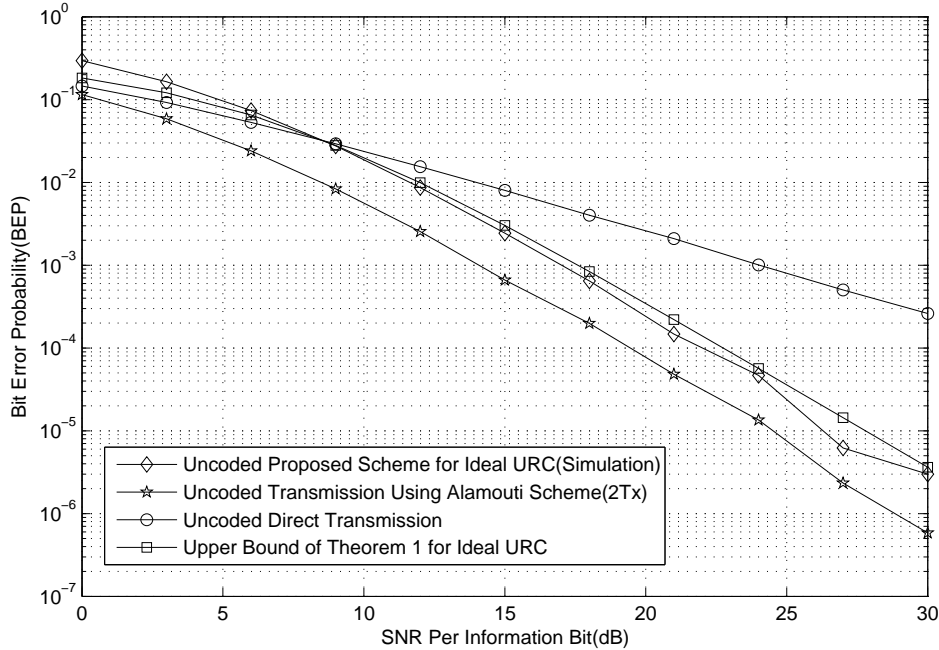}
\caption{The plots of BEP versus SNR per bit for uncoded proposed MARC with ideal URC, uncoded direct transmission, and uncoded Alamouti transmission.}\label{fig:simul1}
\end{figure}
As we expect, for the proposed scheme, the simulation result coincides with the bound of Theorem 1 at high SNR, and the slope of the plot is the same as that of the Alamouti scheme which indicates that the proposed
scheme achieves second order of diversity.

 For the rest of the comparisons, we consider coded schemes in which we
assume that users and relay use the same convolutional code for transmission where each block consists of
50 information bits, i.e., $K=50$ . We assume that the users A and B
apply CRC to their transmitted block in order the relay to be able
to detect its decoders' errors, also a convolutional code with
$rate=1/3$ and $ {d_{free}  = 8}$ which its octal coefficients are
${[{5, \rm{ 7, 7]}} } $ is considered for the coded transmission.

First, we show that the upper bound derived in Theorem 2 is tight
enough. Figure~\ref{fig:simul2} shows the performance of our proposed scheme based on
both simulation and analytical results in the presence of ideal URC.
\begin{figure}[htp]
\centering
\includegraphics[width=0.8\textwidth]{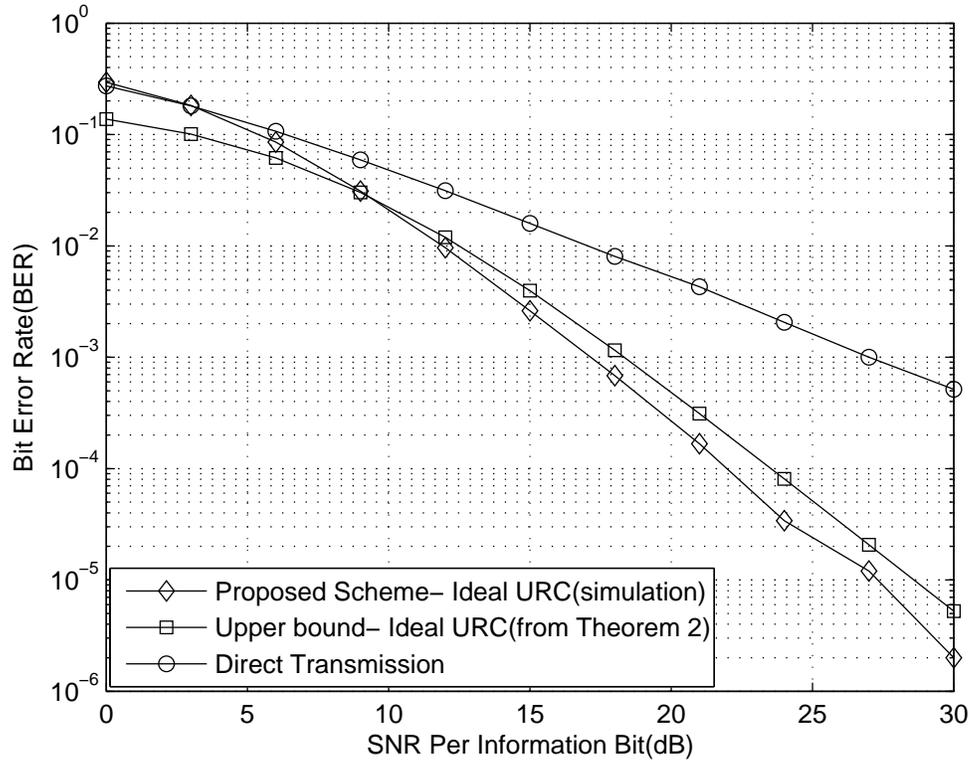}
\caption{The plots of BER versus SNR per bit for MARC with ideal URC
and direct transmission.}\label{fig:simul2}
\end{figure}

\begin{figure}[htp]
\centering
\includegraphics[width=0.8\textwidth]{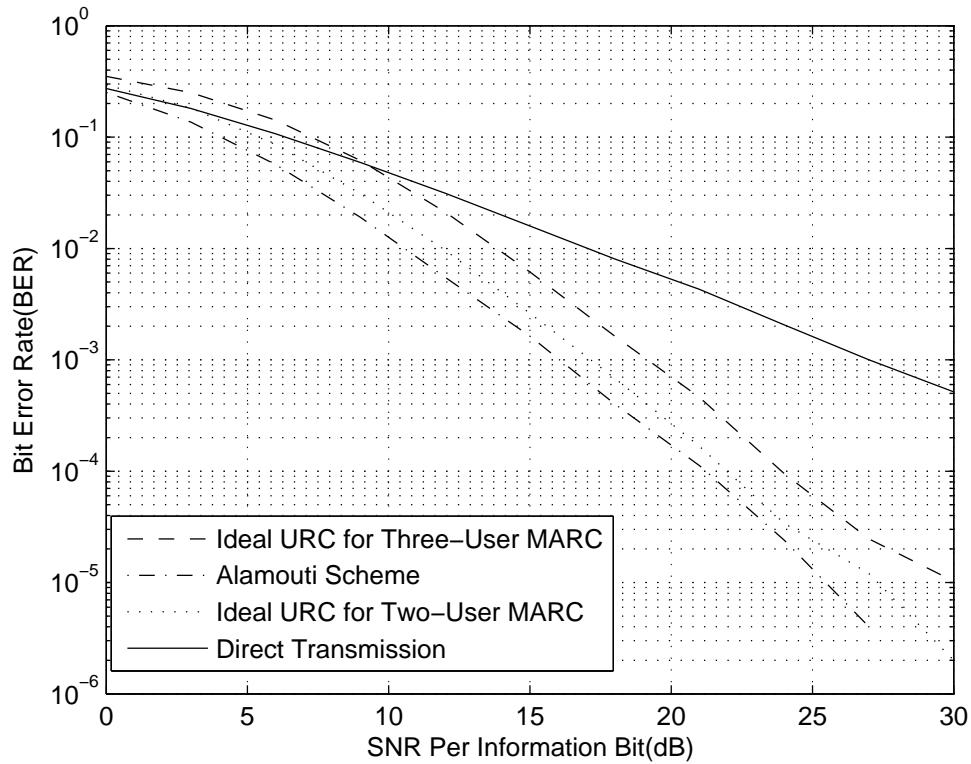}
\caption{Plots of BER versus SNR for three-user MARC, two-user MARC
with ideal URC, direct transmission, and Alamouti
scheme.}\label{fig:erptsqfit}
\end{figure}
For comparison, the performance of direct transmission (without the relay cooperation) and the Alamouti scheme(for two transmit antennas) are provided. We consider equal total transmission powers for comparison. It means that the total power of the user and relay transmissions is the same as the power
used in direct and Alamouti transmissions. As can be realized, the derived upper bound is tight enough. Furthermore, the proposed scheme substantially outperforms the direct transmission. Figure~\ref{fig:simul3} shows the plots of BER versus SNR per bit for our proposed MARC scheme with different URC's qualities based on both simulations and the upper bound derived in Theorem 2. 
\begin{figure}[htp]
\centering
\includegraphics[width=0.8\textwidth]{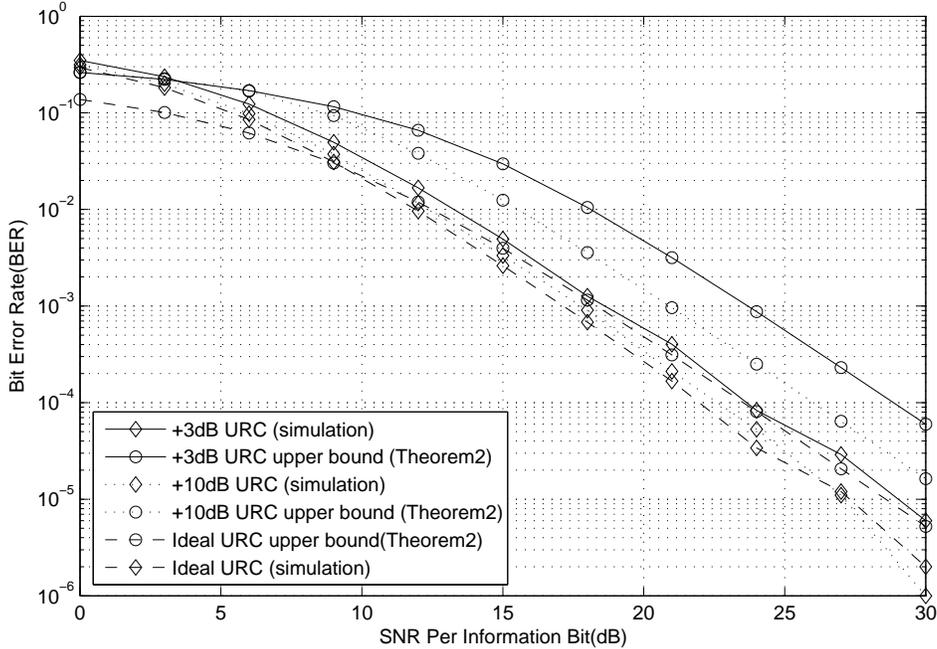}
\caption{The plots of BER versus SNR for different qualities of
URC.}\label{fig:simul3}
\end{figure}
In this plot, for instance the notion "$+3dB$ URC", indicates that the qualities of the sources A and B to the relay links are 3dB better than the sources to the
destination links. From this figure, the upper bound becomes looser when the quality of URC decreases. However, it must be noted that the slope of the plot does not change with the URC's qualities. 

In Figure~\ref{fig:simul4} 6, we compare the performance of our proposed scheme for different qualities of URC with Alamouti scheme and direct transmission (without relay) for the same transmitted powers, based on simulation results. As can be generalized, the slope of BER plots for our scheme for different URC's qualities are the same as the Alamouti scheme which indicates a diversity of order two, as expected from our theoretical derivations.
\begin{figure}[htp]
\centering
\includegraphics[width=0.8\textwidth]{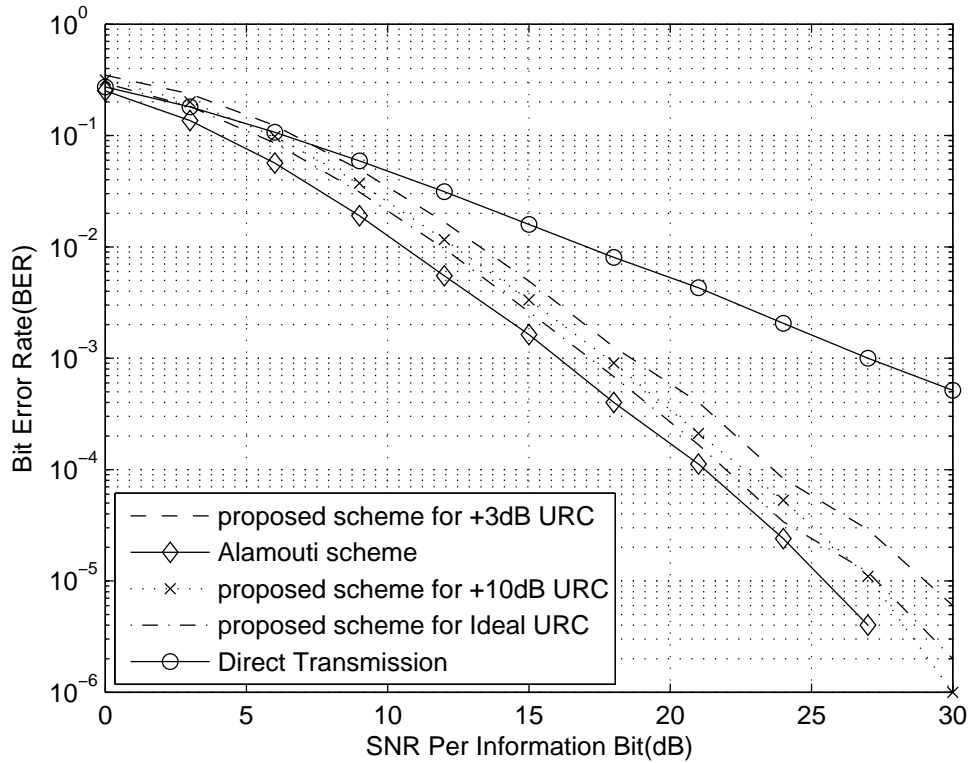}
\caption{The plots of BER versus SNR for different qualities of
URC, direct transmission, and Alamouti scheme.}\label{fig:simul4}
\end{figure}
In our further evaluation, we have also considered a MARC system with three users. We have assumed a ideal URC. Fig. 7 shows the plots of BER versus SNR per bit for MARC with 2 and 3 users. The plots for the Alamouti's scheme and no relay cooperation scheme are included for comparisons. As can be realized again, with only one
relay, the MARC with 3 users achieves second order of diversity. It must be noted that our scheme for 3 users, by 4 antennas can achieve second order of diversity for all users in contrast of the traditional schemes in which 6 antennas are required for the same number of users to achieve the diversity order of 2.


\section{Conclusion}\label{sec:conc}

We have introduced a novel scheme for MARC, which utilizes the
network coding in an analogue space. Using the nulling detection
structure applied in V-BLAST, destination can extract the users'
signals from two independent received paths. We have evaluated the
performance of the proposed receiver and provided a tight upper
bound for its bit error probability. Our analysis for i.i.d. block
Rayleigh fading channels have indicated that the proposed scheme can
achieve a diversity order of two. We have also provided a tight
upper bound of bit error rate for the coded transmission in which
the users and relay use the same Convolutional code. As have shown
for $n=3$ our proposed scheme can be generalized for a MARC with more
than two users and only one relay, in which all users benefit a
diversity order of two, in contrast to the traditional non-cooperated
schemes in which $2n$ antennas are required for the same order of
diversity, where $n$ is the number of users.
\appendices
\section{Proof of Theorem~\ref{thm:result1}}\label{app:thmresult1}

In what follows, without loss of generality, we find the BEP for user A. Then, considering the $j-$th bit of user A's codeword, the BEP may be expressed using the conditional bit error probabilities as follows
\begin{equation}
P_{b,\text{A}}  = \text{E}\Big[\Pr (\hat x_{\text{A},j}  \ne x_{A,j})\Big] = \text{E}\Big[\sum\limits_{i =
0}^3 {\Pr (\hat x_{\text{A},j}  \ne x_{\text{A},j} |s_i )} \Pr (s_i )\Big],
\end{equation}
in which $s_i$ stands for the rely's state, are defined in Tabel~\ref{table:relayscenarios} and the expectation is taken on channel gains. As we mentioned in previous
section, bit error probabilities of the system conditioned on ${s_0}$ and ${s_2}$ are the same for user A, therefore we have
\begin{equation}\label{eq:s0-prob}
\Pr (\hat x_{\text{A},j}  \ne x_{\text{A},j} |S = s_0 ) = \Pr (\hat x_{\text{A},j}  \ne
x_{\text{A},j} | s_2 ) = \text{E}\Big[Q(\sqrt {2|h_{\text{A,D}}|^2 P/N_0 })\Big].
\end{equation} 
Then, for ${s_1}$, as there exists also relay transmission, we have

\begin{equation}\label{eq:s1-prob}
\Pr (\hat x_{\text{A},j}  \ne x_{\text{A},j} |s_1 ) = \text{E}\Big[Q(\sqrt {2(|h_{\text{A,D}} |^2  +
|h_{\text{R,D}}|^2)P/N_0 } )\Big]
\end{equation}

For state  ${s_3}$ , depending on whether or not the user with the
highest received SNR at the MRC receiver's output (${{u_{max}^{1}}}$
defined in equation~\eqref{eq:highest-snr}, or ${{u_{max}^{2}}}$ defined in equation~\eqref{eq:second-highest-snr}), is user A, we will have different values for BEP.
Because of the symmetrical structure of the scheme we have
\begin{equation}\label{eq:prior-user-prob}
\Pr ({u_{max}^{1}}= A) = \Pr ({u_{max}^{2}}  = A) = \frac{1}{2}
\end{equation}
As a result, we simply obtain
\begin{equation}\label{eq:s3-prob}
\Pr (\hat x_A  \ne x_A |S = s_3 ) = \text{E}(\frac{1}{2}Q(\sqrt {2\text{SNR}_{u_{max}^{1}
}}) + \frac{1}{2}Q(\sqrt {2\text{SNR}_{{u_{max}^{2}} } } )) ,\end{equation}
where from equations~\eqref{eq:highest-snr} and \eqref{eq:second-highest-snr}, we have
\begin{equation}
\begin{array}{l}
\text{SNR}_{u_{max}^1}  = (|h_{\text{A,D}} |^2  + \frac{{|h_{\text{B,D}} h_{\text{R,D}} |^2 }}{{|h_{\text{B,D}} |^2  + |h_{\text{R,D}} |^2 }})P/N_0  \\
\text{SNR}_{{u_{max}^{2}}}  = (|h_{\text{A,D}} |^2  + |h_{\text{R,D}} |^2 )P/N_0  \\
 \end{array}
\end{equation}
Now, we compute equations . Plugging ${Q(x) \le \frac{1}{2}e^{ - \frac{{x^2 }}{2}}}$ in equations \eqref{eq:s0-prob}, \eqref{eq:s1-prob}, and \eqref{eq:s3-prob}, and taking the expectation, we obtain
\allowdisplaybreaks \begin{subeqnarray}\slabel{eq:subeq-prob-s02}
\Pr (\hat x_{\text{A},j}  &\ne& x_{\text{A},j} |s_0 ) = \Pr (\hat x_{\text{A},j}  \ne
x_{\text{A},j} | s_2 ) \leq \frac{1}{2}\phi _{|h_{\text{A,D}} |^2 }(P/N_0 )\\ \slabel{eq:subeq-prob-s1}
\Pr (\hat x_{\text{A},j}  &\ne& x_{\text{A},j} | s_1 ) \leq \frac{1}{2}\phi _x
(P/N_0)\\ \slabel{eq:subeq-prob-s3}
\Pr (\hat x_{\text{A},j}  &\ne& x_{\text{A},j} |S = s_3 ) \leq \frac{1}{4}[\phi_x
(P/N_{0}) + \phi _y (P/N_{0})]
\end{subeqnarray}
where $ { \phi _{|h_{\text{A,D}} |^2 } }$, ${ \phi _X (.) } $, and  $ { \phi _Y
(.) } $ are the characteristic functions of ${|h_{\text{A,D}} |^2 }$, X, and Y,
respectively, which X and Y are defined as
\begin{equation}
\begin{array}{l}
x \buildrel \Delta \over = |h_{\text{A,D}} |^2  + |h_{\text{R,D}} |^2 {\rm{   }}\\
y \buildrel \Delta \over = |h_{\text{A,D}} |^2  + z ,\\
 \end{array}
\end{equation} 
where $z = \frac{{|h_{\text{B,D}} h_{\text{R,D}} |^2 }}{{|h_{\text{B,D}} |^2  + |h_{\text{R,D}} |^2 }}.$

Due to independency between $ { |h_{\text{A,D}} |^2 } $ and $z$, we
simply have
\begin{equation}\label{eq:phixy}
\begin{array}{l}
\phi_x (S) = \phi _{|h_{\text{A,D}} |^2 } (S)\phi _{|h_{\text{R,D}} |^2 } (S) \\
\phi_y (S) = \phi _{|h_{\text{A,D}} |^2  + z} (S) = \phi _{|h_{\text{A,D}} |^2 } (S)\phi_z (S).\\
 \end{array}
\end{equation}
Then, since ${|h_{\text{A,D}} |^2 }$ and  ${|h_{\text{R,D}} |^2 }$ are exponentially distributed with parameter $\Omega_{\rightarrow D}$, their characteristic functions
are given by
\begin{equation}\label{eq:phiad} 
\phi _{|h_{\text{A,D}} |^2 } (S)= \phi _{|h_{\text{R,D}} |^2 } (S)= \frac{\Omega_{\rightarrow D}}{S +\Omega_{\rightarrow D}}.
\end{equation} 
For deriving the characteristic function of $z$ we need to know its PDF, so we propound the following lemma.

\begin{lemm}\label{lem:probability-lemma} 
Suppose that $u$ and $v$ are two independent exponentially-distributed random variables as follows
\begin{equation}
\begin{array}{l}
 u \sim \lambda \exp ( - \lambda_u u) \\
 v \sim \lambda \exp ( - \lambda_v v). \\
 \end{array}
\end{equation}
Define $z=\frac{uv}{u+v}$. Then, we have
\begin{equation}\label{eq:fzz}
f_z (z) = 2\lambda e^{ - 2\lambda z} \{ 2\lambda zK_1 (2\lambda z) +
2\lambda zK_0 (2z\lambda )\} ,z \ge 0{\rm{    }}
\end{equation}
where  ${K_0}$ and ${K_1}$  are modified Bessel functions of second kind.
\end{lemm}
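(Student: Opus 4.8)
The plan is to obtain $f_z$ by conditioning on $u$, performing a change of variables in $v$, and then reducing the resulting single integral to standard modified Bessel functions of the second kind. (As a sanity interpretation, note that $z = uv/(u+v)$ is equivalent to $1/z = 1/u + 1/v$, so $z$ is a ``parallel combination'' of $u$ and $v$, which already signals that the answer should be symmetric in the two variables.)

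First I would fix $u$ and regard $z = uv/(u+v)$ as a monotone function of $v$. Solving gives $v = zu/(u-z)$, which is valid only for $u > z$, since as $v$ ranges over $(0,\infty)$ the value $z$ ranges over $(0,u)$, with Jacobian $dv/dz = u^2/(u-z)^2$. Hence the conditional density is $f_{z\mid u}(z\mid u) = \lambda\, e^{-\lambda zu/(u-z)}\, u^2/(u-z)^2$ on $0 < z < u$, and integrating against $f_u(u) = \lambda e^{-\lambda u}$ over $u \in (z,\infty)$ yields
\begin{equation}
f_z(z) = \lambda^2 \int_z^\infty \frac{u^2}{(u-z)^2}\, \exp\!\left(-\lambda\Big(u + \frac{zu}{u-z}\Big)\right) du .
\end{equation}
The key algebraic step is then the substitution $t = u-z$, which symmetrizes the exponent via $u + zu/(u-z) = t + 2z + z^2/t$. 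This factors out $e^{-2\lambda z}$ and turns the rational prefactor into $(t+z)^2/t^2 = 1 + 2z/t + z^2/t^2$, giving
\begin{equation}
f_z(z) = \lambda^2\, e^{-2\lambda z} \int_0^\infty \left(1 + \frac{2z}{t} + \frac{z^2}{t^2}\right) e^{-\lambda(t + z^2/t)}\, dt .
\end{equation}

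Finally I would invoke the integral representation $\int_0^\infty t^{\nu-1} e^{-bt - a/t}\, dt = 2(a/b)^{\nu/2} K_\nu(2\sqrt{ab})$ with $a = \lambda z^2$ and $b = \lambda$, so that $2\sqrt{ab} = 2\lambda z$ and $(a/b)^{1/2} = z$, applied to the three terms at orders $\nu = 1, 0, -1$. Using $K_{-1} = K_1$, the $t^0$ and $t^{-2}$ terms each contribute $2z\,K_1(2\lambda z)$ while the $t^{-1}$ term contributes $4z\,K_0(2\lambda z)$; summing and restoring the factor $\lambda^2 e^{-2\lambda z}$ gives $4\lambda^2 z\, e^{-2\lambda z}\big[K_1(2\lambda z) + K_0(2\lambda z)\big]$, which is exactly the claimed $f_z(z)$.

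The main obstacle is this last step: correctly matching each of the three integrals to the Bessel representation with the appropriate index, and in particular handling the $\nu = -1$ term through the identity $K_{-1} = K_1$ so that the $K_1$ contributions combine cleanly. The change of variables and Jacobian are routine once the support condition $u > z$ is respected, and a final normalization check that $\int_0^\infty f_z(z)\,dz = 1$ would confirm no constant has been dropped.
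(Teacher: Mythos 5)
Your proof is correct, but it takes a different route from the paper's. The paper conditions on $v$, computes the \emph{CDF} by splitting $\Pr(uv/(u+v)\le z)$ into the ranges $\xi<z$ and $\xi>z$, and arrives at $F_Z(z)=1-2\lambda z e^{-2\lambda z}K_1(2\lambda z)$ using the Bessel integral $\int_0^\infty x^{\eta-1}e^{-\beta/x-\gamma x}\,dx=2(\beta/\gamma)^{\eta/2}K_\eta(2\sqrt{\beta\gamma})$ only at $\eta=1$; obtaining the stated density \eqref{eq:fzz} then requires differentiating, which uses the recurrence $K_1'(x)=-K_0(x)-K_1(x)/x$ (a step the appendix actually leaves implicit, stopping at the CDF). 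You instead condition on $u$, invert $z=uv/(u+v)$ to $v=zu/(u-z)$ with the Jacobian $u^2/(u-z)^2$, symmetrize the exponent via $t=u-z$, and evaluate three instances of the same Bessel integral at $\nu=1,0,-1$, using $K_{-1}=K_1$ to combine the $K_1$ contributions. Both computations check out and yield $f_z(z)=4\lambda^2 z e^{-2\lambda z}\left[K_1(2\lambda z)+K_0(2\lambda z)\right]$, which is exactly \eqref{eq:fzz}. The trade-off: the paper's CDF route needs only one Bessel evaluation but defers the work to a differentiation with Bessel recurrences, whereas your direct density computation needs three Bessel evaluations but lands on the claimed formula with no further identities; your version is arguably the more complete verification of the statement as written, and your observation that $1/z=1/u+1/v$ explains the symmetry of the answer is a worthwhile sanity check either way.
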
 

\begin{proof}
Please refer to Appendix~\ref{app:proof-of-lem}.
\end{proof} 

We use Lemma~\ref{lem:probability-lemma} with $u=|h_{\text{A,D}}|^2$ and $v=|h_{\text{B,D}}|^2$. Consider the following equality~\cite{integraltables}

\begin{equation}\label{eq:equality}
\begin{array}{l}
\int_0^\infty  {z^{\mu  - 1} e^{ - \alpha z} K_b (\beta z)dz =
\frac{{\sqrt \pi  (2\beta )^b }}{{(\alpha  + \beta )^{\mu  + b} }}}
\frac{{\Gamma (\mu  + b)\Gamma (\mu  - b)}}{{\Gamma (\mu  +
\frac{1}{2})}}F(\mu  + b,b + \frac{1}{2};\mu  +
\frac{1}{2};\frac{{\alpha  - \beta }}{{\alpha  + \beta }}){\rm{ }}\\
{Re{(\mu)}>{|Re{(b)}|},Re{(\alpha+\beta)}>0}\\
\end{array}
\end{equation} 
where ${K_v(x)} $ is the modified Bessel function of second kind.\footnote[1]{$K_v(x)$ is the solution of second-order ordinary differential equation
\[ {x^2}{\frac{{d^2}{y}}{{d}{x^2}}}+{x}\frac{{d}{y}}{{d}{x}}-{(x^2+v^2)}{y} = 0.\]}Employing equations~\eqref{eq:fzz}-\eqref{eq:equality}, the characteristic function of $Z$ can
be directly derived, as follows  
\begin{equation}\label{eq:phiz} \phi_{z}(S)=
4\Omega_{\rightarrow D}^2\frac{{\sqrt \pi  ({4\Omega_{\rightarrow D}} ) }}{{(4\Omega_{\rightarrow D}+S)}^3 }
\frac{\Gamma (3)\Gamma (1)}{\Gamma (2.5)}F(3,1.5;2.5;\frac{{S
}}{{4\Omega_{\rightarrow D}+S }}).\end{equation}

Substituting equations~\eqref{eq:phiad} and \eqref{eq:phiz} in \eqref{eq:phixy}, $\phi_{x}(S)$ and $\phi_{y}(S)$ will be derived. Then consequently by
substituting these results in equations~\eqref{eq:subeq-prob-s02}-\eqref{eq:subeq-prob-s3} and by using equation~\eqref{eq:s3-prob}, the desired bound is simply obtained.
\section{Proof of Lemma~\ref{lem:probability-lemma}}\label{app:proof-of-lem}
We write

\[
\begin{array}{ll}
 F_Z (z ) &= \Pr (\frac{{uv}}{{u + v}} \le z ) = \Pr (uv \le z (u + v)) = \int {\Pr (uv \le z (u + v)|v = \xi )} {\rm{ }}f_V (\xi )d\xi  \\
  &= \int_{\xi  = 0}^z  {\Pr (u \ge \frac{{z \xi }}{{\xi  - z }}|v = \xi )} {\rm{ }}f_V (\xi )d\xi  + \int_{\xi  = z }^\infty  {\Pr (u \le \frac{{z \xi }}{{\xi  - \alpha }}|v = \xi )} {\rm{ }}f_V (\xi )d\xi\\
 \end{array}
\]
Then, because of independency of  $u$ and $v$, we have

\[
\Pr (u \le \frac{{\alpha \xi }}{{\xi  - z }}|v = \xi ) = \Pr (u
\le \frac{{z \xi }}{{\xi  - z }}) = 1 - e^{ - \lambda _u\frac{{z \xi }}{{\xi  - \alpha }}}
\]
and
\[
\Pr (u \ge \frac{{z \xi }}{{\xi  - z }}|v = \xi ) = \Pr (u
\ge \frac{{z \xi }}{{\xi  - z}}) = 1
\]
We may then write
\[
\begin{array}{ll}
\Pr (\frac{{uv}}{{u + v}} \le z ) &= \int_{\xi  = 0}^z f_v (\xi )d\xi  + \int_{\xi  = z}^\infty  {\lambda_v e^{ -\lambda _v \xi } (1 - e^{ - \lambda_u\frac{{z \xi}}{{\xi  - z }}} )d\xi } \\&= (1 - e^{ - \lambda _v z} ) +\int_{\xi  =z }^\infty \lambda_v e^{ - \lambda _v \xi }(1 - e^{ -\lambda_u\frac{{z \xi }}{{\xi  - z }}})d\xi \\
\end{array}
\]
Then, after rearrangement, we obtain
\begin{equation}
\begin{array}{ll}
 \Pr (\frac{{uv}}{{u + v}} \le z ) &= (1 - e^{ - \lambda _v z } ) + \int_{\xi  = z }^\infty  {\lambda _v e^{ - \lambda _v \xi } } d\xi  - \int_{\xi  = z }^\infty  {\lambda _v } e^{ - (\lambda _u \frac{{z \xi }}{{\xi  - z }} + \lambda _v \xi )} d\xi  \\
  &= (1 - e^{ - \lambda _v z } ) + e^{ - \lambda _v z }  - \int_{\xi  = 0}^\infty  {\lambda _v e^{ - (\lambda _u \frac{{z (\xi  +z)}}{\xi } + \lambda_v (\xi  + z ))} } d\xi  \\
 &= 1 - \lambda_v e^{ - (\lambda _u z  + \lambda _v z )} \int_{\xi  = 0}^\infty  {e^{ - (\lambda _u (\frac{{z^2 }}{\xi }) + \lambda _v (\xi ))} } d\xi  \\
 \end{array}
\end{equation}
We use the following equality~\cite{integraltables}
\[
\int_0^\infty  {x^{\eta  - 1} e^{ - \frac{\beta }{x} - \gamma x} }
dx = 2(\frac{\beta }{\gamma })^{\frac{\eta }{2}}K_\eta  (2\sqrt
{\beta \gamma } ){\rm{ }}\\
,{[Re{(\beta)}>0, Re{(\gamma)}>0]}\\
\]
where ${K_\eta  (x) }$ is the modified Bessel function. So by
setting

\[
\eta  = 1{\rm{ }},{\rm{ }}\beta  = z^2 \lambda _u,\gamma  = \lambda _v
\]
and by ${ \lambda _v  = \lambda _u  = \lambda }$, we easily obtain

\[
\Pr (\frac{{uv}}{{u + v}} \le z ) = 1 - 2\lambda z e^{ -
2z\lambda } K_1 (2z \lambda )
\]
leading to

\[
F_Z(z) = 1 - 2\lambda ze^{-2\lambda z} K_1 (2z\lambda ),z \ge 0
\]
\section{Proof of Theorem~\ref{thm:result2}}\label{app:thmresult2}

Bit error rate (BER) in a BSC channel with error probability $p$
where users use a convolutional code at high SNR is given by~\cite{digitalcomm}

\begin{equation}\label{eq:codedpbcited}
p_b  \simeq \frac{1}{k}B_{d_{free} } 2^{d_{free} }
p^{\frac{{d_{free} }}{2}}
\end{equation}

Similar to the proof of Theorem~\ref{thm:result1}, we compute the BER conditioned on
the 4 states  $s_i$'s of the relay node. Briefly, we compute the
conditional bit error probability for case 3. By substituting
equation \eqref{eq:s3-prob} in \eqref{eq:codedpbcited} we have

\begin{equation}
P_{b|s_3 }  = \frac{1}{k}B_{d_{free} } 2^{d_{free} }\text{E}\{
\frac{1}{2}(Q(\sqrt {2\text{SNR}_{u_{max}^{1} } } ))^{\frac{{d_{free} }}{2}}  +
\frac{1}{2}(Q(\sqrt {2\text{SNR}_{{u_{max}^{2}} } } ))^{\frac{{d_{free} }}{2}} \}
\end{equation}
Using ${ Q(x) \le \frac{1}{2}e{}^{ - \frac{{x^2 }}{2}} } $ we
have

\begin{equation}
P_{b|s_3 }  \le \frac{1}{{2k}}B_{d_{free} } 2^{d_{free} }{\rm{
}}E\{ \frac{1}{{2^{\frac{{d_{free} }}{2}} }}e^{ - \text{SNR}_{u_{max}^{1}}\frac{{d_{free}
}}{2}}  + \frac{1}{{2^{\frac{{d_{free} }}{2}} }}e^{ -
\text{SNR}_{{u_{max}^{2}}}\frac{{d_{free} }}{2}} \}
\end{equation}

By the same way, the conditional BEP values for cases 1 and 2 can be
computed by substituting equations \eqref{eq:s0-prob} and \eqref{eq:s1-prob} in \eqref{eq:codedpbcited}.Then by
substituting the conditional BER in \eqref{eq:s3-prob} remaining of the proof is
like the proof of the Theorem 1.





\newpage

\end{document}